\newcommand{\C}{\mathbb{C}}
\newcommand{\oph}{\overline{\phi}}
\newcommand{\opsi}{\overline{\psi}}
\newcommand{\R}{\mathbb{R}}
\newcommand{\zG}{\Gamma}
\newcommand{\zl}{\lambda}
\newcommand{\zO}{\Omega}
\newcommand{\deff}{\stackrel{\text{\tiny def}}{=}}
\newtheorem{theorem}{\bf Theorem}
\numberwithin{theorem}{section}
\newtheorem{definition}[theorem]{\bf Definition}
\newenvironment{proof}
{\begin{trivlist}\item[\hskip\labelsep\quad{\bf Proof.}
\hspace{0.5 em}]}{\hfill \rule{0.5em}{0.5em} \end{trivlist}}
\numberwithin{equation}{section}
\numberwithin{figure}{section}
\author{Francesco Demontis \footnote{Dip. Matematica, Universit\`a di Cagliari,
Viale Merello 92, 09121 Cagliari, Italy}\,\,\footnote{Research
supported by RAS under grant PO Sardegna 2007-2013, L.R. 7/2007}		}
\title{Exact solutions to the modified \\Korteweg-de Vries equation.}
\begin{document}
\date{}
\maketitle
\thispagestyle{plain}

\begin{abstract}
A formula for certain exact solutions to the modified
Korteweg-de Vries (mKdV) equation is obtained via
the inverse scattering transform method. 
The kernel of the
relevant Marchenko integral equation is 
written with the help
of matrix exponentials as 
$$\zO(x+y;t)=Ce^{-(x+y)A}e^{8A^3 t}B,$$
where the real matrix triplet $\left(A,B,C\right)$ consists of a
constant $p\times p$ matrix $A$
with eigenvalues having positive real parts, 
a constant $p\times 1$ matrix
$B$, and a constant $1\times p$ matrix $C$ for a 
positive integer $p$. 
Using separation of variables, 
the Marchenko integral equation
is explicitly solved yielding exact solutions to 
the mKdV equation.
These solutions are constructed in terms 
of the unique solution $P$ to the Sylvester 
equation  $AP+PA=BC$ or in terms of 
the unique solutions
$Q$ and $N$ to the respective Lyapunov equations
$A^\dagger Q+QA=C^\dagger C$ and
$AN+NA^\dagger=BB^\dagger$, where the $\dagger$ denotes 
the matrix conjugate transpose. Two interesting examples
are provided.
\end{abstract}

\section{Introduction}\label{sec:1}
Consider the focusing modified Korteweg-de Vries (mKdV) equation
\begin{equation}\label{0.1}
u_t+u_{xxx}+6|u|^2u_x=0\,,
\end{equation}
where the subscripts denote the appropriate partial derivates, $u$ 
represents a real scalar function and $(x,\,t)\in \R^2$.

The modified Korteweg-de Vries (mKdV) equation arises in applications to the
dynamics of thin elastic rods \cite{MT}, phonons in anharmonic lattices
\cite{Ono}, meandering ocean jets \cite{RP}, traffic congestion \cite{KS, Nag,
LL, GDXD}, hyperbolic surfaces \cite{Sc}, ion acoustic solitons \cite{Lon},
Alfv\'en waves in collisionless plasmas \cite{KEC}, slag-metallic bath
interfaces \cite{AC}, and Schott\-ky barrier transmission lines \cite{ZDSL}.

In this paper we present a method to construct certain exact solutions to
\eqref{0.1} that are globally analytic on the entire $xt$-plane and  
decay exponentially as $x\to\pm\infty$ for each fixed $t\in\R$. The method used 
to obtain these solutions is based on the Inverse Scattering Transform
(IST) \cite{AblClar}-\cite{APT}, \cite{Lamb} and \cite{Nov}. The IST matches eq.\eqref{0.1}
to the first order system of ordinary differential equations
\begin{subequations}\label{0.2}
\begin{align}
\dfrac{d\xi}{dx} &=-i\lambda \xi+u(x,t)\,\eta,\\
\dfrac{d\eta}{dx} &=i\lambda \eta-u(x,t)\,\xi,
\end{align}
\end{subequations}
known as the Zakharov-Shabat system \cite{ZS}. In Section \ref{sec:2} we develop
its direct and the inverse scattering theory. More precisely, 
representing the corresponding scattering data 
in terms of a matrix realization \cite{AkVan6}-\cite{TUIOCOR}, the Marchenko
integral equation is separable and hence, can be solved algebraically. 
Its solution is easily related to the solution of \eqref{0.1}. 

The method used in this article has several advantages:
\begin{enumerate}
\item[1.] It is generalizable to the matrix version of eq.\eqref{0.1}
and to other (matrix) nonlinear evolution equations (see, e.g., \cite{DM2}).
\item[2.] The explicit formulas found in this paper are expressed in a concise form 
in terms of a triplet $\left(A,\,B,\,C\right)$
where $A$ is a real square matrix of dimension $p$, $C$  is a real row 
vector and $B$ is a real column vector.
When the matrix order of $A$ is very large, we have 
an explicit formula for the solution of the initial value problem for 
\eqref{0.1}. Using computer algebra, we can ``unzip" the solution in terms of
exponential, trigonometric, and polynomial functions of 
$x$ and $t$, but this unzipped expression
may take several pages!
\item[3.] Our method easily treats nonsimple bound-state poles and the time evolution
of the corresponding bound-state norming constants.  
\end{enumerate}
This paper is organized as follows: In Section \ref{sec:2} we develop the direct and the inverse
scattering theory for system \eqref{0.2} and describe how the IST allows us to get the solution to 
\eqref{0.1}. In Section \ref{sec:3} we construct the exact solutions to the initial value 
problem \eqref{0.1} in terms of real matrices $\left(A,\,B,\,C\right)$ solving 
the Marchenko equation. In Section \ref{sec:4} we write the triplet $\left(A,\,B,\,C\right)$ 
in a ``canonical" form. Finally, in Section \ref{sec:5} 
we discuss the one-soliton and a multipole solution as examples.

\section{IST method for the mKdV equation}\label{sec:2}
The mKdV equation \eqref{0.1} is solvable by the inverse scattering tranform (IST).
This method associates \eqref{0.1} with the first order system of ODE \eqref{0.2} where 
the coefficient $u$ is called the potential. We suppose that $u$ is a real scalar function
belonging to $L^1(\R)$. In this section we recall the basic ideas 
behind the IST and introduce the scattering coefficients and  Marchenko integral equations.
A more detailed exposition can be found 
in \cite{AblClar}-\cite{APT}, \cite{Lamb} and \cite{Nov}.

Let us introduce the {\it Jost functions} from the right $\opsi(\zl,x)$ and $\psi(\zl,x)$,
the Jost functions from the left $\phi(\zl,x)$ and $\oph(\zl,x)$, and the Jost 
matrix solutions $\Psi(\zl,x)$ and $\Phi(\zl,x)$ from the right and from the left 
as those solutions of \eqref{0.2} satysfying the asymptotic conditions: 
\begin{align*}
\Psi(\zl,x)&=\begin{pmatrix}\opsi(\zl,x)&\psi(\zl,x)\end{pmatrix}=\begin{cases}
e^{-i\zl Jx}[I_2+0(1)],&x\to+\infty,\\ e^{-i\zl Jx}a_l(\zl)+o(1),
&x\to-\infty,\end{cases}\\
\Phi(\zl,x)&=\begin{pmatrix}\phi(\zl,x)&\oph(\zl,x)\end{pmatrix}=\begin{cases}
e^{-i\zl Jx}[I_2+o(1)],&x\to-\infty,\\ e^{-i\zl Jx}a_r(\zl)+o(1),
&x\to+\infty,
\end{cases}
\end{align*}
where $J=\begin{pmatrix}1 & 0\\ 0 & -1\end{pmatrix}$. Here we have not written the variable $t$. 
Since this system is first order,
we have $$\Phi(\zl,x)=\Psi(\zl,x)a_r(\zl),\qquad\Psi(\zl,x)=\Phi(\zl,x)a_l(\zl),$$ where 
$a_r(\zl)$ and $a_l(\zl)$ are called {\it transmission coefficient matrices}. It is easy to prove that 
$a_r(\zl)$ and $a_l(\zl)$, for all $\zl\in\R$, are unitary matrices with unit determinant, one
being the inverse of the other (see, e.g., \cite{io, APT, V04}). It is convenient to write these 
matrices in the form 
$$a_l(\zl)=\begin{pmatrix}a_{l1}(\zl)&a_{l2}(\zl)\\ a_{l3}(\zl)&a_{l4}(\zl)
\end{pmatrix},\qquad a_r(\zl)=\begin{pmatrix}a_{r1}(\zl)&a_{r2}(\zl)\\
a_{r3}(\zl)&a_{r4}(\zl)\end{pmatrix},$$
where $a_{lk}(\zl)$ and $a_{rk}(\zl)$ for $k=1,2,3,4$ are real scalar functions. 
The Jost functions $\phi(\zl,x)$ and $\psi(\zl,x)$ \cite{io, APT, V04} are, for each 
$x\in\R$, analytic in $\C^+$, continuous in $\zl\in\overline{\C^+}$ and tend to $1$ as $|\zl|\to\infty$
from within $\C^+$, where we use $\C^+$ ($\C^-$) to denote the upper (lower) complex open half planes,
 and $\overline{\C^{\pm}}=\C^{\pm}\cup\R$. Instead, the Jost 
functions $\oph(\zl,x)$ and $\opsi(\zl,x)$ are, for each 
$x\in\R$, analytic in $\C^-$, continuous in $\zl\in\overline{\C^-}$ and tend to $1$ as $|\zl|\to\infty$
from within $\C^-$. So, it is natural to consider the $2\times 2$ matrices of functions
$$F_+(\zl,x)=\begin{pmatrix}\phi(\zl,x)&\psi(\zl,x)\end{pmatrix},\qquad
 F_-(\zl,x)=\begin{pmatrix}\opsi(\zl,x)&\oph(\zl,x)\end{pmatrix}.$$
As a result, $F_+(\zl,x)$ is, for each $x\in\R$, analytic in $\C^+$ and continuous in $\overline{\C^+}$.
Instead, $F_{-}(\zl,x)$ is, for each $x\in\R$, analytic in $\C^-$ and continuous in $\overline{\C^-}$. 
Using this information we arrive at the Riemann-Hilbert problem
\begin{equation}\label{2.1}
 F_-(\zl,x)=F_+(\zl,x)JS(\zl)J,
\end{equation}
where $$S(\zl)=\begin{pmatrix}T_r(\zl)&L(\zl)\\ R(\zl)&T_l(\zl)\end{pmatrix}$$
is called the {\it scattering matrix}. 

The direct problem can be formulated as follows: Starting from the potential $u(x)$, 
construct the scattering matrix
or, equivalently, determine one reflection coefficient $R(\zl)$ or $L(\zl)$, 
the poles (bound-states) $\zl_j$
of the transmission coefficient $T_r(\zl)$ or $T_l(\zl)$ 
(see Theorem $3.16$ of \cite{io}),
and the norming constants $c_{js}$ corresponding to those poles. 
Moreover, under the technical assumption that $a_{l1}(\zl)$,
$a_{r1}(\zl)$, $a_{l4}(\zl)$ and $a_{r4}(\zl)$ are 
invertible for each $\zl\in\R$, 
we can relate the transmission 
coefficients to the coefficients appearing in the scattering matrix $S(\zl)$. 
More precisely, we have
\begin{alignat*}{3}
T_r(\zl)&=a_{r1}(\zl)^{-1},\,\qquad R(\zl)&=-a_{l4}(\zl)^{-1}a_{l3}(\zl)=a_{r3}(\zl)a_{r1}(\zl)^{-1},\\
T_l(\zl)&=a_{l4}(\zl)^{-1},\,\qquad
L(\zl)&=-a_{r1}(\zl)^{-1}a_{r2}(\zl)=a_{l2}(\zl)a_{l4}(\zl)^{-1}.
\end{alignat*} 

The inverse scattering problem consists of the (re)-construction of the unique 
potential $u(x)$ if one knows 
the scattering data. 
In this paper, following \cite{AkVan6, AktDV7}, we solve this problem via the 
Marchenko method (see \cite{Nov, ZS}) as follows:
\begin{enumerate}
\item[a.] From the scattering data $\Big\{R(\zl),\,\left\{\zl_j,\, 
\{c_{js}\}_{s=0}^{n_j-1}\right\}_{j=m+1}^{m+n}\Big\}$, construct the function 
\begin{equation}\label{2.2}
\zO(y)\deff\hat{R}(y)+\sum_{j=m+1}^{m+n}\sum_{s=0}^{n_j-1}c_{js}\frac{y^s}{s!}e^{i\zl_j y}
\end{equation}
where $\hat{R}(y)=\frac{1}{2\pi}\int_0^\infty R(\zl)e^{i\zl y}d\zl$ 
represents the Fourier transform of $R(\zl)$.
\item[b.] Solve the Marchenko integral equation
\begin{equation}\label{2.3}
K(x,y)-\Omega(x+y)^\dagger+\int_x^\infty dz\int_x^\infty ds \,
K(x,z)\,\Omega(z+s)\,\Omega(s+y)^\dagger=0,
\end{equation}
where $y>x.$
\item[c.] Construct the potential $u(x)$ by using the following formula:
\begin{equation}\label{2.4}
u(x)=-2K(x,x).
\end{equation}
\end{enumerate}
Having presented the direct and inverse scattering problems corresponding 
to the LODE associated
to the mKdV equation, we now discuss how the IST allows us 
to obtain the solution 
to the initial value problem for \eqref{0.1}. The following diagram 
illustrates the IST procedure:
$$\begin{CD}
\fbox{$\begin{matrix}\text{given}\\ u(x,0)\end{matrix}$}@>{\begin{smallmatrix}\text{direct\
scattering\ problem}\\ \text{with\ potential}\
u(x,0)\end{smallmatrix}}>> 
\fbox{$\begin{matrix} R(\zl),\,\{\zl_j,\, 
\{c_{js}(0)\}_{s=0}^{n_j-1}\}_{j=m+1}^{m+n}\end{matrix}$}\\
@VV{\begin{smallmatrix}\text{mKdV}\\ \text{solution}\end{smallmatrix}}V @V{\begin{smallmatrix}\text{time evolution \ of}\\
\text{scattering\ data}\end{smallmatrix}}VV\\
\fbox{$u(x,t)$}@<<{\begin{smallmatrix}\text{inverse\ scattering\ problem}\\
\text{with\ time evolved\ scattering\ data}\end{smallmatrix}}<
\fbox{$R(\zl,t),\,\{\zl_j,\, 
\{c_{js}(t)\}_{s=0}^{n_j-1}\}_{j=m+1}^{m+n}$}
\end{CD}$$
To solve the initial value problem for \eqref{0.1}, we use the initial 
condition $u(x,0)$ as 
a potential in the system \eqref{0.2}. After that we develop the 
direct scattering theory
as shown above and build the scattering matrix. Successively, 
let the initial 
scattering data evolve in time. The transmission coefficient
does not change in time and, as a consequence, also the bound states do not.
The reflection 
coefficient $R(\zl)$ evolves according to $R(\zl,t)=e^{8\zl^3t}R(\zl)$. It remains 
to determine the evolution of
the norming constants. Extending our previous results on the 
nonlinear Schr\"odinger equation
(\cite{AktDV7, DM2, io, Bs}) to the mKdV equation, we obtain 
the following time evolution of
the norming constants: 
$$\begin{pmatrix} c_{jn_j-1}(t)& \ldots & c_{j0}(t)\end{pmatrix}=
\begin{pmatrix}c_{jn_j-1}(0) & \ldots & c_{j0}(0)\end{pmatrix}\,e^{-A_j^{3}t},$$ 
where $A_j$ is the matrix defined by eq. \eqref{4.7}.
Finally, we solve the inverse scattering problem by the Marchenko method 
for \eqref{0.2} with the time evolved scattering 
data $\Big\{R(\zl,t),\,\left\{\zl_j,\, 
\{c_{js}(t)\}_{s=0}^{n_j-1}\right\}_{j=m+1}^{m+n}\Big\}$,
replacing in \eqref{2.2}
$R(\zl)$ with $R(\zl,t)$ and $c_{js}$ with $c_{js}(t)$. Then the function
\begin{equation}\label{2.5}
u(x,t)=-2K(x,x;t)\,,
\end{equation} 
is a solution to the mKdV equation.

\section{Explicit solutions for the mKdV equation}\label{sec:3}
In this section we obtain two different but equivalent formulas yielding 
solutions to the mKdV equation.
Using the expressions 
of the evolved reflection coefficient and the evolved 
norming constants in \eqref{2.2}, we get  
\begin{equation}\label{3.1}
\zO(y; t)=\frac{1}{2\pi}\int_0^\infty R(\zl)e^{8\zl^3 t}e^{i\zl y}d\zl+
\sum_{j=m+1}^{m+n}\sum_{s=0}^{n_j-1}c_{js}(t)\frac{y^s}{s!}e^{i\zl_j y}\,,
\end{equation}
which satisfies the first order PDE
\begin{equation}\label{3.2}
\zO_t(y;t)+8\zO_{yyy}(y;t)=0\,.
\end{equation}

Now, following the approach of \cite{AkVan6, AktDV7, TUIOCOR}, 
we write the kernel $\zO(y)$ introduced in \eqref{2.2} in the form 
\begin{equation}\label{3.3}
\zO(y)=Ce^{-Ay}B,\quad y\geq 0\,,
\end{equation}
where $A,\,B,\,C$ are real matrices of size $p\times p,\,p\times 1$
and $1\times p$, respectively, for some integer $p$. Eq. \eqref{3.2}
suggests us to take $\zO(y;t)$ as
\begin{equation}\label{3.4}
\zO(y;t)=Ce^{-Ay}e^{8A^3 t}B,\,\,y\geq 0\,.
\end{equation}
For reasons to be clarified later, we have some further requirements on the 
triplet $\left(A,\,B,\,C\right)$. More precisely,
\begin{enumerate}
\item Our triplets $\left(A,\,B,\,C\right)$ is a {\it minimal representation} 
of the kernel $\zO(y)$, i.e,
$$\bigcap_{r=1}^{+\infty}\,\ker{CA^{r-1}}=
\bigcap_{r=1}^{+\infty}\,\ker{B^\dagger (A^\dagger)^{r-1}}=\{0\},$$ 
and we refer the reader to \cite{BaGoKa} for many details on this subject. 
\item All of the eigenvalues of the matrix $A$ have positive real parts.
\end{enumerate}
Following the procedure described in the preceding section, we find explicit solutions of 
the mKdV equation solving the Marchenko integral equation \eqref{2.3}, 
where the kernel is given by \eqref{3.4} and the unknown function $K$  
depends on $t$. 
It is immediate to calculate
\begin{equation}\label{3.5}
\zO(y;t)^\dagger=B^\dagger e^{-A^\dagger y}e^{8(A^\dagger)^3 t}C^\dagger,\,\,y\geq 0\,.
\end{equation}
By using \eqref{3.4} and \eqref{3.5}, eq. \eqref{2.3} becomes
\begin{align}\label{3.6} 
\nonumber K(x,y;t)-\Big(B^\dagger e^{-A^\dagger x} & - \int_x^\infty dz\,\int_x^\infty ds \,
K(x,z;t)\,Ce^{-Az+8A^3 t}e^{-As}\,BB^\dagger e^{-A^\dagger s}\Big)\\ 
& \cdot e^{-A^\dagger y+8(A^\dagger)^3 t}C^\dagger=0\,, \quad y>x\,.
\end{align}
Looking for a solution of \eqref{3.6} in the form 
\begin{equation}\label{3.7}
K(x,y;t)=H(x,t)e^{-A^\dagger y+8(A^\dagger)^3 t}C^\dagger\,,
\end{equation}
and introducing the matrices $Q$ and $N$ as
\begin{equation}\label{3.8}
Q=\int_0^\infty ds\,e^{-A^\dagger s}C^\dagger Ce^{-A s},\qquad
N=\int_0^\infty dr\,e^{-A r}BB^\dagger e^{-A^\dagger r}\, ,
\end{equation}
after some easy calculations we obtain
\begin{equation}\label{3.9}
H(x,t)\zG(x,t)=B^\dagger e^{-A^\dagger x}\,,
\end{equation}
where (denoting by $I_p$ the identity matrix of order $p$)
\begin{equation}\label{3.10}
\zG(x,t)=I_p+e^{-A^\dagger x+8(A^\dagger)^3t}Qe^{-2Ax+8A^3t}Ne^{-A^\dagger x}\,. 
\end{equation}
Substituting \eqref{3.9} into \eqref{3.7} we get
$$K(x,y;t)=B^\dagger e^{-A^\dagger x}\zG(x,t)^{-1}
e^{-A^\dagger y+8(A^\dagger)^3 t}C^\dagger=
B^\dagger F^{-1}(x,t)e^{-A^\dagger (y-x)}C^\dagger\,.$$
Putting
\begin{equation}\label{3.11}
F(x,t)=e^{2A^\dagger x-8(A^\dagger)^3t}+Qe^{-2Ax+8A^3t}N\,, 
\end{equation}
we arrive at the solution formula
\begin{equation}\label{3.12}
u(x,t)=-2 B^\dagger F^{-1}(x,t)C^\dagger\,.
\end{equation}
The matrices $Q,\, N,\, F(x,\,t)$ and the scalar function
$u(x,\,t)$ introduced by \eqref{3.8}, \eqref{3.11} and 
\eqref{3.12}, respectively, 
satisfy the properties 
stated in the following
\begin{theorem}\label{th:3.1} 
Suppose that the triplet $\left(A,\,B,\,C\right)$ is real and
is a minimal representation of the kernel 
$\zO(y;t)$, and that the eigenvalues of $A$ have positive real parts. Then
\begin{enumerate}
\item[a)] The matrices $Q$ and $N$ are real and positive selfadjoint, i.e $Q^\dagger=Q$ and $N^\dagger=N$.
\item[b)] The matrices $Q$ and $N$ are simultaneously invertible.
\item[c)] The matrix $F(x,t)$ is invertible on the entire $xt$-plane. Moreover, for each fixed $t$, 
$F(x,t)^{-1}\to 0$ as $x\to\pm \infty$.
\item[d)] The real scalar function $u(x,t)$ satisfies \eqref{0.1} everywhere on the $xt$-plane.
Moreover, $u(x,t)$ is analytic on the entire $xt$-plane and decays exponentially for each fixed $t$ as 
$x\to\pm \infty$.
\end{enumerate}
\end{theorem}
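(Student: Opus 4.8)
The plan is to establish the four items in order, as each relies on the preceding one. For (a) I would first observe that the integrals in \eqref{3.8} converge absolutely: since the eigenvalues of $A$ have positive real parts, $e^{-As}$ and $e^{-A^\dagger s}$ decay exponentially in $s$. Self\-adjointness of $Q$ and $N$ is immediate from their integrands, reality follows from the reality of $(A,B,C)$, and positivity from $v^\dagger Qv=\int_0^\infty |Ce^{-As}v|^2\,ds\ge 0$ together with the analogous identity for $N$. Differentiating the integrands in $s$ and integrating by parts yields the Lyapunov equations $A^\dagger Q+QA=C^\dagger C$ and $AN+NA^\dagger=BB^\dagger$, which will be the workhorses later. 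For (b) I would promote positivity to strict positivity using minimality: $Qv=0$ forces $Ce^{-As}v\equiv 0$, and differentiating at $s=0$ gives $CA^{r-1}v=0$ for every $r$, so $v$ lies in $\bigcap_r\ker CA^{r-1}=\{0\}$; the controllability half of minimality treats $N$ identically. Hence $Q$ and $N$ are positive definite, in particular both invertible.

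For (c) I would set $M=e^{-2Ax+8A^3t}$ and note the key identity $e^{2A^\dagger x-8(A^\dagger)^3t}=(M^\dagger)^{-1}$, so that $F=(M^\dagger)^{-1}+QMN=(M^\dagger)^{-1}(I_p+M^\dagger QMN)$. Because $M^\dagger QM$ and $N$ are positive definite, their product is similar to $(M^\dagger QM)^{1/2}N(M^\dagger QM)^{1/2}$ and therefore has strictly positive spectrum; thus $I_p+M^\dagger QMN$ has spectrum contained in $(1,\infty)$ and is invertible, giving invertibility of $F$ on the entire $xt$\-plane. For the decay of $F^{-1}$ I would use two factorizations: from $F^{-1}=(I_p+M^\dagger QMN)^{-1}M^\dagger$ and $M^\dagger\to 0$ as $x\to+\infty$ one obtains $F^{-1}\to 0$; and from $F=QMN(I_p+N^{-1}M^{-1}Q^{-1}(M^\dagger)^{-1})$, legitimate since $Q,N$ are invertible by (b), one gets $F^{-1}\sim N^{-1}M^{-1}Q^{-1}$ with $M^{-1}\to 0$ as $x\to-\infty$.

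In (d), reality of $u$ is clear from the reality of the triplet, analyticity on the whole plane follows from (c) and the entireness of the matrix exponentials, and exponential decay follows from the decay of $F^{-1}$ established in (c). The heart of the matter is the mKdV identity. I would first record the purely linear fact $F_t+F_{xxx}=0$, read off directly from the exponential form of $F$ in \eqref{3.11} (it is the matrix counterpart of \eqref{3.2}). Writing $G=F^{-1}$ and using $(F^{-1})_x=-GF_xG$ repeatedly, I would compute $u_x=2B^\dagger GF_xGC^\dagger$, then $u_{xx}$, $u_{xxx}$, and $u_t=2B^\dagger GF_tGC^\dagger=-2B^\dagger GF_{xxx}GC^\dagger$. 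The pure third\-order pieces cancel, leaving $u_t+u_{xxx}=6B^\dagger[\,2GF_xGF_xGF_xG-GF_xGF_{xx}G-GF_{xx}GF_xG\,]C^\dagger$.

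It then remains to match this against $-6u^2u_x=-48(B^\dagger GC^\dagger)^2(B^\dagger GF_xGC^\dagger)$, and this is the step I expect to be the real obstacle. Here I would substitute the explicit forms $F_x=2A^\dagger(M^\dagger)^{-1}-2QAMN$ and $F_{xx}=4(A^\dagger)^2(M^\dagger)^{-1}+4QA^2MN$ and use the Lyapunov equations to replace the combinations $A^\dagger Q$ and $NA^\dagger$ by $C^\dagger C-QA$ and $BB^\dagger-AN$. The point is that each such replacement inserts a rank\-one block $C^\dagger C$ or $BB^\dagger$ into an otherwise long matrix product; sandwiched between the outer $B^\dagger$ and $C^\dagger$, a block $C^\dagger C$ splits off the scalar factor $B^\dagger GC^\dagger$ (and a block $BB^\dagger$ does the same), and it is exactly these scalar factors that reassemble the cubic term $u^2u_x$ on the right\-hand side. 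The genuine difficulty is the bookkeeping: tracking the many resulting products and checking that every term not proportional to a rank\-one insertion cancels. I would organize the computation by grading terms according to the number of rank\-one insertions they carry, and, where convenient, invoke the companion Sylvester equation $AP+PA=BC$ of the equivalent formulation to shorten the simplification of the cubic term.
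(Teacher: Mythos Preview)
Your treatment of (a)--(c) is correct and in fact more explicit than the paper, which simply refers these items (together with the analyticity and exponential decay in (d)) to Theorems~4.2 and~4.3 of \cite{TUIOCOR}.

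For the direct verification that $u$ solves \eqref{0.1}, your strategy is sound but organized differently from the paper's. You work with $F$ itself, exploit the linear relation $F_t+F_{xxx}=0$, and correctly reduce the problem to identifying
\[
6B^\dagger\bigl[\,2GF_xGF_xGF_xG-GF_xGF_{xx}G-GF_{xx}GF_xG\,\bigr]C^\dagger
\]
with $-6u^2u_x$, planning to do this via Lyapunov substitutions that insert rank-one blocks $C^\dagger C$ and $BB^\dagger$. The paper instead strips off the outer exponentials and works with
\[
\Gamma(x,t)=e^{-A^\dagger x+8(A^\dagger)^3t}\,F(x,t)\,e^{-A^\dagger x}=I_p+Q(x,t)N(x).
\]
The payoff is that the Lyapunov equations are used \emph{once}, to derive the compact identity
\[
(\Gamma^{-1})_x=\Gamma^{-1}A^\dagger+A^\dagger\Gamma^{-1}-2\,\Gamma^{-1}\bigl(A^\dagger-Q(x,t)AN(x)\bigr)\Gamma^{-1},
\]
and thereafter every derivative of $u$ is obtained by iterating this formula alone. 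All of $u_t$, $u_x$, $u_{xx}$, $u_{xxx}$ and $2|u|^2u_x=u_xu^\dagger u+uu^\dagger u_x$ then come out in the uniform shape $B^\dagger e^{-A^\dagger x}\Gamma^{-1}[\,\cdot\,]\Gamma^{-1}e^{-A^\dagger x}e^{8(A^\dagger)^3t}C^\dagger$ (eqs.~\eqref{proof:3}, \eqref{proof:5}--\eqref{proof:7}), and verifying mKdV reduces to a direct comparison of the inner brackets, with no further substitutions needed. Your $F$-route reaches the same destination, but because $F_x$ and $F_{xx}$ mix the two exponentials $(M^\dagger)^{-1}$ and $M$, each Lyapunov replacement spawns extra cross terms to track; that is exactly the bookkeeping you flag as the obstacle, and the $\Gamma$-formulation is the device that dissolves it. One minor point: bringing in the Sylvester solution $P$ is a detour here, since the paper's verification of \eqref{3.12} uses only $Q$ and $N$.
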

\begin{proof}
The proof of the items $a),\,b),\,c)$ and of the analyticity and asymptotic behaviour 
of the solution $u(x,\,t)$ is identical to the proof of items (ii) and (iii) of Therorem $4.2$ 
of \cite{TUIOCOR} (taking into account also Theorem $4.3$ of the same paper). So we can refer
the reader to this paper for details. However, we can verify
directly that our solution
\eqref{3.12} satisfies eq. \eqref{0.1}. 
In order to do so, we use a slightly different notation. In particular,
let us write formula \eqref{3.12} as 
\begin{equation}\label{proof:1}
u(x;t)=-2 B^\dagger e^{-A^\dagger x} \zG^{-1}(x,t)e^{-A^\dagger x}e^{8 (A^\dagger)^3 t}C^\dagger\,,
\end{equation}
where 
$\zG(x,\,t)=I_p+Q(x,\,t)N(x)$ with $Q(x,\,t)=e^{-A^\dagger x}e^{8 (A^\dagger)^3 t}Q 
e^{-A x}e^{8 A^3 t}$ and $N(x)=e^{-A x}Ne^{-A^\dagger x}$,  $Q$ and $N$ being 
defined in \eqref{3.8}. We will see in Section \ref{sec:4} that these two matrices 
are the unique solutions of the so-called Lyapunov equations, i.e. eqs. \eqref{4.1a}. 
We recall the following rule: 
If $A(x)$ is an invertible matrix
of functions depending on $x$ such that its derivative with respect to $x$ exists, then
\begin{equation}\label{proof:2}
\frac{\partial}{\partial x}\left(A(x)^{-1}\right)=-A(x)^{-1}
\left(\frac{\partial}{\partial x}A(x)\right)A(x)^{-1}.
\end{equation}
Applying this differentiation rule to the function $\zG(x,\,t)$ and 
deriving \eqref{proof:1} with respect to $t$ , 
we easily get (from now on
we omit the dependence of $\zG(x,\,t)$ on $x$ and $t$)
\begin{equation}\label{proof:3}
u_t=-16B^\dagger e^{-A^\dagger x}\zG^{-1}\left[(A^\dagger)^3-Q(x,\,t)A^3 N(x)\right]
\zG^{-1}e^{-A^\dagger x}e^{8 (A^\dagger)^3 t}C^\dagger\,.
\end{equation}
Now, if one applies again \eqref{proof:2} to the function $\zG$ and take the $x$-derivative,
after some straightforward calculations and using also \eqref{4.1a}, we obtain
\begin{equation}\label{proof:4}
\left(\zG^{-1}\right)_x=\zG^{-1}A^\dagger+A^\dagger \zG^{-1}
-2\zG^{-1}(A^\dagger-QAN)\zG^{-1}\,.
\end{equation}
As a consequence of \eqref{proof:4}, we can calculate in a direct way $u_x$ obtaining
\begin{equation}\label{proof:5}
u_x=4B^\dagger e^{-A^\dagger x}\zG^{-1}\left[(A^\dagger)-Q(x,\,t)A N(x)\right]
\zG^{-1}e^{-A^\dagger x}e^{8 (A^\dagger)^3 t}C^\dagger\,.
\end{equation}
Now, computing the derivative of \eqref{proof:5} and taking into account \eqref{proof:4}
we find
\begin{align*}
 u_{xx}&=8 B^\dagger e^{-A^\dagger x}\zG^{-1}\left[(A^\dagger)^2+QA^2N-2(A^\dagger-QAN)
\zG^{-1}(A^\dagger-QAN)\right]\\ &\cdot
\zG^{-1}e^{-A^\dagger x}e^{8 (A^\dagger)^3 t}C^\dagger\,.
\end{align*}
By very similar calculations we get
\begin{align}\label{proof:6}
 \nonumber & u_{xxx}=16 B^\dagger e^{-A^\dagger x}\zG^{-1}
\Big[(A^\dagger)^3-QA^3N-3((A^\dagger)^2+QA^2N)\zG^{-1}(A^\dagger-QAN)
\\ \nonumber & -3(A^\dagger-QAN)\zG^{-1}((A^\dagger)^2+QA^2N)\\ & +6(A^\dagger-QAN)
\zG^{-1}(A^\dagger-QAN)\zG^{-1}(A^\dagger-QAN)\Big]
\zG^{-1}e^{-A^\dagger x}e^{8 (A^\dagger)^3 t}C^\dagger\,.
\end{align} 
Finally, using \eqref{proof:5} and taking into account that $u$ is a real scalar function, we obtain
\begin{align}\label{proof:7}
 \nonumber & 2|u|^2u_x=u_x u^\dagger u+uu^\dagger u_x=16 B^\dagger e^{-A^\dagger x}\zG^{-1}
\\ \nonumber & \cdot 
\Big[((A^\dagger)^2+QA^2N)\zG^{-1}(A^\dagger-QAN)
+(A^\dagger-QAN)\zG^{-1}((A^\dagger)^2+QA^2N)\\&-2(A^\dagger-QAN)
\zG^{-1}(A^\dagger-QAN)\zG^{-1}(A^\dagger-QAN)\Big]
\zG^{-1}e^{-A^\dagger x}e^{8 (A^\dagger)^3 t}C^\dagger\,,
\end{align}
and, at this point, it is very simple to observe that $u_{xxx}+6|u|^2u_x=-u_t$.
\end{proof}
Now, we will build a different explicit formula which is equivalent to the one expressed 
by \eqref{3.12}. In order to obtain this result, we first observe that $\zO(y,\,t)$,
as a consequence of the realness of the triplet $\left(A,\,B,\,C\right)$, is a real function. 
As a result, $$\zO(y;\,t)^\dagger=\zO(y;\,t).$$
By using this relation, eq. \eqref{2.3} can be written as
\begin{align}\label{3.17} 
\nonumber K(x,y;t)-\Big(C e^{-A x} & - \int_x^\infty dz\,\int_x^\infty ds \,
K(x,z;t)\,Ce^{-Az+8A^3 t}e^{-As}\,BC e^{-A^\dagger s}\Big)\\ 
& \cdot e^{-Ay+8(A)^3 t}B=0\,, \quad y>x\,.
\end{align}
By very similar computations we get the solution of \eqref{3.17} as 
$$K(x,y;t)=C E^{-1}(x,t)e^{-A^\dagger (y-x)}B\,,$$
where
\begin{align}\label{3.18}
E(x,\,t)=e^{2Ax-8A^3t}+Pe^{-2Ax+8A^3t}P,\quad P=\int_0^\infty ds\, e^{-As}BC e^{-As}\,.
\end{align}
Finally, using eq. \eqref{2.5} we obtain 
\begin{align}\label{3.19}
v(x;t)=-2 C E^{-1}(x,t)B\,.
\end{align}
The next theorem shows the relationships between the matrices $Q,\,N$ and $P$ 
and the solutions formula \eqref{3.12} and \eqref{3.19}. 
\begin{theorem}\label{th:3.2}
Suppose that the triplet $\left(A,\,B,\,C\right)$ is real and is a minimal 
representation of the kernel 
$\zO(y,t)$, and that the eigenvalues of the matrix $A$ have positive real parts. Then
\begin{enumerate}
\item[i.] The following relation holds: $NQ=P^2$.
\item[ii.] The matrix $P$ is invertible on the entire $xt$-plane.
\item[iii.] $E(x,\,t)=F(x,\,t)^\dagger$ and, as a consequence, the matrix $E(x,\,t)$ 
is invertible on the entire
$xt$-plane. 
\item[iv.] For each fixed $t$, $E(x,\,t)^{-1}\to 0$ as $x\to\pm \infty$ .
\item[v.] The real scalar function $v(x,t)$ satisfy \eqref{0.1} everywhere on the $xt$-plane.
Moreover, $v(x,t)$ is analytic on the entire $xt$-plane and tends to zero exponentially 
for each fixed $t$ as 
$x\to\pm \infty$.
\item[vi.] The explicit formulas \eqref{3.12} and \eqref{3.19} yield equivalent 
exact solutions to the mKdV eq. \eqref{0.1} 
everywhere on the entire xt-plane.
\end{enumerate}
\end{theorem}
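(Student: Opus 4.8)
The plan is to reduce every assertion to one observation: $Ce^{-A\tau}B$ and, more generally, $Ce^{-A\tau}A^kB$ are \emph{scalars}, so that the realness of the triplet together with the self-adjointness of $Q$ and $N$ (Theorem \ref{th:3.1}, item a) can be exploited by merely relabeling the integration variables. First I would establish item i. Substituting the integral representations \eqref{3.8} and collapsing the two middle exponentials gives
\begin{equation*}
NQ=\int_0^\infty\!\!\int_0^\infty e^{-Ar}B\,\big(B^\dagger e^{-A^\dagger(r+s)}C^\dagger\big)\,Ce^{-As}\,dr\,ds ,
\end{equation*}
where the bracketed factor is $1\times1$ and, the triplet being real, equals the scalar $Ce^{-A(r+s)}B$. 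Writing $P^2$ from \eqref{3.18} the same way gives $\int_0^\infty\int_0^\infty e^{-As}B\,\big(Ce^{-A(s+r)}B\big)\,Ce^{-Ar}\,ds\,dr$; the two double integrals differ only in the names of the dummy variables, whence $NQ=P^2$. Item ii is then immediate: by item b of Theorem \ref{th:3.1} the matrices $N$ and $Q$ are simultaneously invertible, so $P^2=NQ$ is invertible and hence so is $P$.

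For item iii I would insert an extra factor $A^k$ into the previous computation; the identical scalar reduction and relabeling yield
\begin{equation*}
PA^kP=N(A^\dagger)^kQ ,\qquad k=0,1,2,\dots,
\end{equation*}
the case $k=0$ being item i. Because $-2Ax+8A^3t$ is a polynomial in the single matrix $A$, the exponential $e^{-2Ax+8A^3t}=g(A)$ is an entire function $g$ of $A$; expanding $g$ in powers and applying the above identity term by term gives
\begin{equation*}
Pe^{-2Ax+8A^3t}P=Ne^{-2A^\dagger x+8(A^\dagger)^3t}Q .
\end{equation*}
Taking the conjugate transpose of \eqref{3.11} and using $Q^\dagger=Q$, $N^\dagger=N$ turns $F(x,t)^\dagger$ into $e^{2Ax-8A^3t}+Ne^{-2A^\dagger x+8(A^\dagger)^3t}Q$, which by the last display is exactly $E(x,t)$ of \eqref{3.18}; hence $E=F^\dagger$. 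Since $F$ is invertible on the whole $xt$-plane (item c of Theorem \ref{th:3.1}), so is $E=F^\dagger$, and item iv follows from $E^{-1}=(F^{-1})^\dagger$ together with the decay $F^{-1}\to0$ supplied by the same item.

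Finally, items v and vi both reduce to the equality $v\equiv u$. As $A$ is real, the matrices $F$ and $E$ are real, so $u(x,t)=-2B^\dagger F^{-1}C^\dagger$ is a real scalar and equals its transpose; transposing and using $F^T=F^\dagger=E$ gives $u=-2CE^{-1}B=v$. Therefore $v$ inherits from $u$ (item d of Theorem \ref{th:3.1}) that it solves \eqref{0.1}, is globally analytic, and decays exponentially as $x\to\pm\infty$, which is item v, and the formulas \eqref{3.12} and \eqref{3.19} coincide, which is item vi. I expect the one genuinely delicate point to be item iii: the crux is realizing that the scalar-kernel cancellation survives the insertion of an arbitrary power $A^k$, so that the single algebraic relation $NQ=P^2$ propagates to the full identity between the two matrix exponential cocycles; the remaining items are routine consequences of this together with Theorem \ref{th:3.1}.
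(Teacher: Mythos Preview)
Your proof is correct and essentially follows the route the paper indicates. In the paper the proof is actually skipped: the author refers the reader to Theorems~5.2, 5.4 and 5.5 of \cite{TUIOCOR} and only records two remarks, namely that item~v can be verified by a direct computation like that in Theorem~\ref{th:3.1} and that the equivalence in item~vi follows from $E(x,t)=F(x,t)^\dagger$. Your argument supplies exactly these ingredients and, in addition, gives a clean self-contained derivation of $NQ=P^2$ and of the crucial intertwining identity $P\,e^{-2Ax+8A^3t}\,P=N\,e^{-2A^\dagger x+8(A^\dagger)^3t}\,Q$ via the scalar-kernel trick $B^\dagger e^{-A^\dagger\tau}(A^\dagger)^kC^\dagger=Ce^{-A\tau}A^kB$ and a relabeling of dummy variables. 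This is precisely the mechanism behind the cited results, so your approach and the paper's intended one coincide; you have simply made explicit what the paper outsources.
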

We skip the proof of this theorem, because it can be constructed by rearranging
(with inessential modifications) the proofs 
of Theorems $5.2$, $5.4$ and $5.5$ of \cite{TUIOCOR}. 
However, we remark that:
\begin{itemize} 
\item By very similar calculations to those developed in the proof of Theorem \ref{th:3.1}, 
we can directly verify that the formula \eqref{3.19}
satisfies eq. \eqref{0.1} everywhere on the $xt$-plane.
\item Because $u(x,\,t)$ is real and scalar, 
the equivalence of \eqref{3.12} and \eqref{3.19} follows from the 
relation $E(x,\,t)=F(x,\,t)^\dagger$.
\end{itemize}
We conclude this section with the following 
\begin{theorem}\label{th:3.3}
Suppose that the triplet $\left(A,\,B,\,C\right)$ is real and is a minimal 
representation of the kernel 
$\zO(y,t)$, and that the eigenvalues of the matrix $A$ have positive real parts. Then
the solution to the mKdV equation
given in the equivalent forms \eqref{3.12} and \eqref{3.19} satisfies
\begin{align}\label{3.20}
[u_x(x,t)]^2
=\dfrac{\partial^2 \log(\det E(x,t))}{\partial x^2}=
\dfrac{\partial^2 \log(\det F(x,t))}{\partial x^2}.
\end{align}
\end{theorem}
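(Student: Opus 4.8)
The plan is to prove the chain \eqref{3.20} by computing $\partial_x^2\log\det F$ in closed form and comparing with the known expression for the solution. First I would dispose of the right-hand equality: Theorem~\ref{th:3.2} gives $E(x,t)=F(x,t)^\dagger$, hence $\det E=\overline{\det F}$ and $\partial_x^2\log\det E=\overline{\partial_x^2\log\det F}$; so once the common value is shown to be a real scalar, the two right members coincide and it suffices to treat $\partial_x^2\log\det F$. For this I would apply Jacobi's formula twice, using $\partial_x\log\det F=\operatorname{tr}(F^{-1}F_x)$ and $\partial_x^2\log\det F=\operatorname{tr}(F^{-1}F_{xx})-\operatorname{tr}\big((F^{-1}F_x)^2\big)$.

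The key simplification is a rank-one reduction of $F_x$. Writing $\Theta=e^{-2Ax+8A^3t}$, equation \eqref{3.11} reads $F=(\Theta^\dagger)^{-1}+Q\,\Theta N$, so $F_x=2A^\dagger(\Theta^\dagger)^{-1}-2QA\Theta N$. Feeding the Lyapunov equations \eqref{4.1a}, $A^\dagger Q+QA=C^\dagger C$ and $AN+NA^\dagger=BB^\dagger$, into this expression---so as to rewrite $F_x$ as $A^\dagger$ multiplying $F$ on the left (resp.\ on the right) up to a rank-one correction---yields the two equivalent forms
\[
F_x=2A^\dagger F-2C^\dagger(C\Theta N)=2FA^\dagger-2(Q\Theta B)B^\dagger .
\]
Inserting either into $\operatorname{tr}(F^{-1}F_x)$ and cancelling $\operatorname{tr}(A^\dagger)$ by cyclicity collapses the first log-derivative to $\partial_x\log\det F=2\operatorname{tr}(A^\dagger)-2\,C\Theta NF^{-1}C^\dagger$, whence $\partial_x^2\log\det F=-2\,\partial_x\big(C\Theta NF^{-1}C^\dagger\big)$. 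Differentiating this scalar by the same device---once more invoking \eqref{4.1a} to rewrite $CA\Theta N$ and the commutator $[A^\dagger,F^{-1}]=-F^{-1}(C^\dagger C\Theta N-Q\Theta BB^\dagger)F^{-1}$---makes all $\operatorname{tr}(A^\dagger)$-type terms drop out and leaves only scalars built from $F^{-1}$.

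The main obstacle is the final scalar identity. After the reductions one is left with $\partial_x^2\log\det F=4\,(B^\dagger F^{-1}C^\dagger)\big[\,C\Theta B-C\Theta NF^{-1}Q\Theta B\,\big]$, so everything hinges on
\[
C\Theta NF^{-1}Q\Theta B+B^\dagger F^{-1}C^\dagger=C\Theta B .
\]
I would prove this by the resolvent expansion $F^{-1}=\Theta^\dagger-\Theta^\dagger Q\Theta N F^{-1}$ (equivalently $NF^{-1}Q=(Q^{-1}(\Theta^\dagger)^{-1}N^{-1}+\Theta)^{-1}$, using that $Q,N$ are invertible by Theorem~\ref{th:3.1}), combined with \eqref{4.1a} and the interrelations of Theorem~\ref{th:3.2} (in particular $NQ=P^2$); this is where the bookkeeping is heaviest, because the two sandwiched terms have genuinely different structure and only the coupling of $Q$ and $N$ through the common triplet makes them combine. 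Granting the identity, the bracket equals $B^\dagger F^{-1}C^\dagger$, which by \eqref{3.12} is $-\tfrac12 u$, and since the prefactor $4B^\dagger F^{-1}C^\dagger=-2u$, one obtains $\partial_x^2\log\det F=[u(x,t)]^2$. I should point out that the computation returns the square of $u$, not of $u_x$: already for the one-soliton $u=-\operatorname{sech} w$ with $w=2\alpha x-8\alpha^3t$ one has $\partial_x^2\log\det F=\operatorname{sech}^2 w=u^2$, whereas $u_x^2=\operatorname{sech}^2 w\,\tanh^2 w$. Thus \eqref{3.20} as printed should read $[u(x,t)]^2$ on the left, and the argument above establishes that corrected identity, the $E$-form following at once from $E=F^\dagger$ and the reality of $u$.
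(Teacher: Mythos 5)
Your reductions are correct, and I verified them: with $\Theta=e^{-2Ax+8A^3t}$ the Lyapunov equations \eqref{4.1a} indeed give the two rank-one forms $F_x=2A^\dagger F-2C^\dagger(C\Theta N)=2FA^\dagger-2(Q\Theta B)B^\dagger$, Jacobi's formula then yields $\partial_x\log\det F=2\operatorname{tr}(A^\dagger)-2\,C\Theta NF^{-1}C^\dagger$, and one more differentiation (using the second form of $F_x$ so that $AN+NA^\dagger=BB^\dagger$ collapses the non-sandwiched terms) gives exactly your
\begin{equation*}
\partial_x^2\log\det F=4\,(B^\dagger F^{-1}C^\dagger)\bigl[\,C\Theta B-C\Theta NF^{-1}Q\Theta B\,\bigr].
\end{equation*}
Your substantive correction is also right: \eqref{3.20} as printed is false, and the left-hand side must be $[u(x,t)]^2$. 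The one-soliton check is decisive (at the soliton peak $u_x$ vanishes while $\partial_x^2\log\det E=4a^2\,\mathrm{sech}^2\theta>0$). Note that the paper gives no proof at all of Theorem \ref{th:3.3}---it only cites \cite{TUIOCOR, AktDV7}---and the misprint is explained by that provenance: in \cite{AktDV7} the identity reads $|u|^2=\partial_x^2\log\det F$, whereas the $u_x$-version (with an extra constant) belongs to the sine-Gordon setting of \cite{TUIOCOR}, where the Zakharov--Shabat potential is proportional to $u_x$ rather than, as here, to $u$ itself; the printed statement conflates the two. So your computation, unlike the paper, actually supplies an argument, and for the corrected statement.

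The one genuine gap is the pivotal scalar identity $C\Theta NF^{-1}Q\Theta B+B^\dagger F^{-1}C^\dagger=C\Theta B$, which you assert with a plan but do not prove, and the tools you list (the resolvent expansion of $F^{-1}$ together with $NQ=P^2$) do not suffice as stated: the two terms couple $Q$ and $N$ through $\Theta$, i.e.\ through $N(A^\dagger)^kQ$ for all $k$, not just $k=0$. The missing lemma is the one-parameter extension $Ne^{-A^\dagger z}Q=Pe^{-Az}P$ for all $z$, hence $N\Theta^\dagger Q=P\Theta P$. To prove it, first transpose the real $1\times1$ blocks inside the integrals \eqref{3.8} and \eqref{3.18} to get $B^\dagger e^{-A^\dagger z}Q=Ce^{-Az}P$ and $Pe^{-Az}B=Ne^{-A^\dagger z}C^\dagger$ (this is where realness of the triplet enters irreplaceably); then $M(z):=Ne^{-A^\dagger z}Q-Pe^{-Az}P$ satisfies, by \eqref{4.1a} and \eqref{4.1b}, the linear equation $2M'(z)=AM(z)+M(z)A$ with $M(0)=NQ-P^2=0$, so $M(z)=e^{Az/2}M(0)e^{Az/2}\equiv0$. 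Granting the lemma, your identity falls out in three lines: writing $F^{-1}=(I_p+\Theta^\dagger Q\Theta N)^{-1}\Theta^\dagger$ and pushing factors through the inverse gives $C\Theta NF^{-1}Q\Theta B=C\Theta(I_p+P\Theta P\Theta)^{-1}P\Theta P\Theta B$, while $B^\dagger F^{-1}C^\dagger=CE^{-1}B=C\Theta(I_p+P\Theta P\Theta)^{-1}B$ by $E=F^\dagger$, \eqref{3.18} and the reality of the $1\times1$ value; the two summands then add to $C\Theta B$, and $\partial_x^2\log\det F=4(B^\dagger F^{-1}C^\dagger)^2=u^2$ as you claimed. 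With this lemma inserted, your proof of the corrected statement is complete.
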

The proof of this theorem can be found in \cite{TUIOCOR, AktDV7}. 

\section{Canonical form of the triplet $\left(A,\,B,\,C\right)$.}\label{sec:4}
In this section we show how it is possible, without loss of generality, to choose the triplet 
$\left(A,\,B,\,C\right)$ in ``canonical form."

To obtain this representation, we begin by finding the explicit solutions of the
mKdV given by \eqref{3.12} and \eqref{3.19} following an ``algorithmic" procedure. 
Starting with a real matrix triplet $\left(A,\,B,\,C\right)$ which realizes a minimal 
representation of the function $\zO(y;\,t)$ and such that the eigenvalues 
of $A$ have positive real 
parts, we consider the following equations:
\begin{subequations}\label{4.1}
\begin{align}
& A^\dagger Q+QA=C^\dagger C,\quad AN+NA^\dagger=B B^\dagger,\label{4.1a}\\
& AP+PA=BC\,. \label{4.1b}
\end{align}
\end{subequations}
Equations \eqref{4.1a} are the so-called Lyapunov equations, instead \eqref{4.1b} is
known as a Sylvester equation. These equations are studied in detail 
in \cite{Dym} where 
it is proved that, under our hypotheses on the triplet 
$\left(A,\,B,\,C\right)$, they are uniquely solvable. 
We have the following 
\begin{theorem}\label{th:4.1}
Suppose that the triplet $\left(A,\,B,\,C\right)$ is real and is a minimal representation of the kernel 
$\zO(y,t)$, and that the eigenvalues of the matrix $A$ have positive real parts. Then, the unique 
solutions $Q,\,N$ of the Lyapunov equations and the unique solution $P$ of the Sylvester equation
are such that
\begin{enumerate}
\item The matrices $Q,\,N$ and $P$ are real, and $Q$ and $N$ are selfadjoint.
\item The matrices $Q$ and $N$ can be expressed via \eqref{3.8}, instead 
the matrix $P$ is given by the formula $\eqref{3.18}$. Moreover, $Q$ and $N$ are 
simultaneously invertible, and also 
the matrix $P$ is invertible.
\end{enumerate}
\end{theorem}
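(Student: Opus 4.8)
The plan is to verify by direct computation that the integral formulas \eqref{3.8} and \eqref{3.18} solve the three matrix equations \eqref{4.1a}--\eqref{4.1b}, to invoke the uniqueness guaranteed by \cite{Dym} in order to conclude that these integrals \emph{are} the unique solutions, and then to read off realness, selfadjointness, and invertibility directly from the integral representations together with the minimality hypothesis.

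First I would note that, since every eigenvalue of $A$ has positive real part, the norms $\|e^{-As}\|$ and $\|e^{-A^\dagger s}\|$ decay exponentially as $s\to+\infty$, so each of the integrals defining $Q$, $N$ in \eqref{3.8} and $P$ in \eqref{3.18} converges absolutely. The key computational observation is that each integrand is an exact $s$-derivative; for example,
$$\frac{d}{ds}\left(e^{-A^\dagger s}C^\dagger C e^{-As}\right)=-\left(A^\dagger e^{-A^\dagger s}C^\dagger C e^{-As}+e^{-A^\dagger s}C^\dagger C e^{-As}A\right).$$
Integrating from $0$ to $\infty$ and using the decay at $+\infty$ yields $A^\dagger Q+QA=C^\dagger C$; the identical device gives $AN+NA^\dagger=BB^\dagger$ and, for the Sylvester case, $AP+PA=BC$. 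Since \cite{Dym} shows each of \eqref{4.1a}--\eqref{4.1b} has a unique solution under our hypotheses, the integrals \eqref{3.8} and \eqref{3.18} are exactly those solutions, which establishes the identification asserted in item $2$.

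Item $1$ is then immediate. Because $A$, $B$, $C$ are real, the integrands in \eqref{3.8} and \eqref{3.18} are real-valued, so $Q$, $N$, $P$ are real. Selfadjointness of $Q$ and $N$ follows from the symmetry of their integrands: $(e^{-A^\dagger s}C^\dagger C e^{-As})^\dagger=e^{-A^\dagger s}C^\dagger C e^{-As}$ since $(C^\dagger C)^\dagger=C^\dagger C$, hence $Q^\dagger=Q$, and likewise $N^\dagger=N$. No selfadjointness is claimed for $P$, whose integrand $e^{-As}BC e^{-As}$ need not be Hermitian.

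The substantive part is invertibility, where minimality enters. Both $Q$ and $N$ are positive semidefinite, being integrals of $(Ce^{-As})^\dagger(Ce^{-As})$ and $(B^\dagger e^{-A^\dagger s})^\dagger(B^\dagger e^{-A^\dagger s})$. If $Qv=0$ then $v^\dagger Q v=\int_0^\infty\|Ce^{-As}v\|^2\,ds=0$, forcing $Ce^{-As}v\equiv 0$; differentiating repeatedly at $s=0$ gives $CA^{r-1}v=0$ for every $r\geq 1$, so $v\in\bigcap_{r\geq 1}\ker(CA^{r-1})=\{0\}$ by minimality, and $Q$ is positive definite. The same argument, using the controllability half $\bigcap_{r\geq1}\ker(B^\dagger(A^\dagger)^{r-1})=\{0\}$ of the minimality condition, shows $N$ is positive definite; thus $Q$ and $N$ are simultaneously invertible. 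Finally, invertibility of $P$ follows from the relation $NQ=P^2$ of Theorem \ref{th:3.2}(i): since $Q$ and $N$ are invertible so is $P^2$, whence $(\det P)^2=\det(P^2)\neq0$ and $P$ is invertible. The only step demanding genuine care is the implication from $Ce^{-As}v\equiv0$ to membership in the kernel intersection; the rest is a routine use of the fundamental theorem of calculus and the positivity of the integrands.
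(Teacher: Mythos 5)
Your proposal is correct, and it supplies an actual argument where the paper gives none: the paper's entire ``proof'' of Theorem \ref{th:4.1} is a citation to \cite{TUIOCOR} and \cite{Dym}. Your route is the standard one and is sound at every step: the exact-derivative trick plus decay of $e^{-As}$ verifies that the integrals \eqref{3.8} and \eqref{3.18} satisfy \eqref{4.1a}--\eqref{4.1b}; uniqueness from \cite{Dym} (which the paper itself invokes just before the theorem) identifies them as \emph{the} solutions; realness and selfadjointness are read off the integrands; and the observability/controllability halves of the minimality hypothesis give positive definiteness, hence invertibility, of $Q$ and $N$. The one place worth flagging is your treatment of $P$: you derive its invertibility from the identity $NQ=P^2$ of Theorem \ref{th:3.2}(i), but that theorem's proof is also skipped in this paper, so you are leaning on another unproved statement. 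This is easily repaired in your own framework: since $B^\dagger e^{-A^\dagger(r+s)}C^\dagger$ is a real scalar it equals $Ce^{-A(r+s)}B$, whence
\begin{equation*}
NQ=\int_0^\infty\!\!\int_0^\infty e^{-Ar}B\,\bigl(Ce^{-A(r+s)}B\bigr)\,Ce^{-As}\,dr\,ds=P^2,
\end{equation*}
a two-line computation from the integral representations themselves; with that inserted, your proof is fully self-contained modulo the uniqueness statement from \cite{Dym}.
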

A proof of this theorem can be found in \cite{TUIOCOR} and \cite{Dym}.

Suppose that we are able to solve \eqref{4.1a} (respectively, \eqref{4.1b}).
As a consequence of Theorem \ref{th:4.1}, their matrix solutions 
are those introduced in the preceding section by \eqref{3.8} (respectively, \eqref{3.18}).
 For these reasons, knowing
the solutions $Q$ and $N$ (respectively, $P$),
we can construct, in a unique way, the matrix $F(x,\,t)$ 
via eq. \eqref{3.11} (respectively, $E(x,\,t)$ given by \eqref{3.18}), and, finally, we can write down 
the solution of the mKdV by \eqref{3.12} (respectively, \eqref{3.19}). 

It is natural to look for a larger 
class including triplets such that the solutions of the corresponding Lyapunov or Sylvester equations   
have the same properties as in Theorem \ref{th:4.1}. In fact, for every triplet in this class,
we can repeat the procedure above introduced. For this reasons we introduce the following
\begin{definition}\label{def:4.1} 
We say that the triplet $\left(A,B,C\right)$ of size $p$ belongs
{\it to the admissible class} if the following conditions are met:
\item{$(i)$} The matrices $A,$ $B,$ and $C$ are real.
\item{$(ii)$} The triplet $\left(A,B,C\right)$ corresponds to the minimal realization 
when 
used in the right-hand side of (3.3).
\item{$(iii)$} None of the eigenvalues of
$A$ is purely imaginary and no two eigenvalues of
$A$ can occur symmetrically with respect to the imaginary axis in the complex
$\lambda$-plane.
\end{definition}
\begin{definition}\label{def:4.2} 
Two triplets $(\tilde{A},\tilde{B},\tilde{C})$ and $(A,B,C)$
are called {\it equivalent} if they lead to the same potential $u(x,t)$.
\end{definition}
Let us now consider a triplet $(\tilde{A},\tilde{B},\tilde{C})$ in the admissible class.
What can we state about the solutions of the corresponding Lyapunov or Sylvester equations
\begin{subequations}\label{4.2}
\begin{align}
& \tilde{A}^\dagger \tilde{Q}+\tilde{Q}\tilde{A}=\tilde{C}^\dagger \tilde{C},
\quad \tilde{A}\tilde{N}+\tilde{N}\tilde{A}^\dagger=\tilde{B} \tilde{B}^\dagger,\label{4.2a}\\
& \tilde{A}\tilde{P}+\tilde{P}\tilde{A}=\tilde{B}\tilde{C}\, ? \label{4.2b}
\end{align}
\end{subequations}
In \cite{TUIOCOR, Dym} the reader will find the proof of the following
\begin{theorem}\label{th:4.2}
If the triplet $(\tilde{A},\tilde{B},\tilde{C})$ belongs to the admissible class, 
then the following statements hold:
\begin{enumerate}
\item[1)] Equations \eqref{4.2a} and \eqref{4.2b} are uniquely solvable.
\item[2)] The matrix solutions $\tilde{Q}$ and $\tilde{N}$ of 
\eqref{4.2a} are selfadjoint.
Moreover, $\tilde{Q}$ and $\tilde{N}$ are simultaneously invertible and 
also $\tilde{P}$ is invertible.
\item[3)] The matrices  
\begin{subequations}
\begin{align}
& \tilde{F}(x,t)=e^{2\tilde{A}^\dagger x-8(\tilde{A}^\dagger)^3t}+\tilde{Q}
e^{-2\tilde{A}x+8\tilde{A}^3t}\tilde{N},\label{4.3a}\\
& \tilde{E}(x,\,t)=e^{2\tilde{A}x-8\tilde{A}^3t}+\tilde{P}
e^{-2\tilde{A}x+8\tilde{A}^3t}\tilde{P},\,\label{4.3b}
\end{align}
\end{subequations}
are invertible on the entire $xt$-plane.
\item[4)] The functions
\begin{equation}\label{4.4}
\tilde{u}(x;t)=-2 \tilde{B}^\dagger \tilde{F}^{-1}(x,t)\tilde{C}^\dagger,\,
\quad \tilde{v}(x;t)=-2 \tilde{C} \tilde{E}^{-1}(x,t)\tilde{B}\,.
\end{equation}
yield two different but equivalent explicit solutions of \eqref{0.1}. Moreover, they are 
analytic on the entire $xt$-plane and tend to zero exponentially for each fixed $t$ as 
$x\to\pm \infty$.
\end{enumerate}
\end{theorem}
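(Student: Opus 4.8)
The plan is to split the statement into its algebraic part (items 1) and 2)), which I would derive directly from the solvability theory of Sylvester and Lyapunov equations, and its analytic part (items 3) and 4)), which I would obtain by reducing the admissible case to the already-established case of Theorems \ref{th:3.1}--\ref{th:4.1}.

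First I would treat items 1) and 2). The Sylvester equation $\tilde A\tilde P+\tilde P\tilde A=\tilde B\tilde C$ has a unique solution exactly when $\sigma(\tilde A)\cap\sigma(-\tilde A)=\emptyset$, and each Lyapunov equation in \eqref{4.2a} is uniquely solvable exactly when $\sigma(\tilde A^\dagger)\cap\sigma(-\tilde A)=\emptyset$; since $\tilde A$ is real, $\sigma(\tilde A^\dagger)=\sigma(\tilde A)$, so both conditions amount to demanding that no eigenvalue $\mu$ of $\tilde A$ have $-\mu$ again in the spectrum. This is precisely condition $(iii)$ of Definition \ref{def:4.1}: a purely imaginary $\mu$ would give $-\mu=\overline\mu\in\sigma(\tilde A)$, while a pair symmetric about the imaginary axis is excluded outright. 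Hence \eqref{4.2a}--\eqref{4.2b} are uniquely solvable, and realness of the solutions is inherited from that of $\tilde A,\tilde B,\tilde C$ by uniqueness. Taking the conjugate transpose of each equation in \eqref{4.2a} shows that $\tilde Q^\dagger$ and $\tilde N^\dagger$ solve the same equations, so uniqueness forces $\tilde Q^\dagger=\tilde Q$ and $\tilde N^\dagger=\tilde N$.

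For items 3) and 4) I would construct, for each admissible triplet, an equivalent triplet whose system matrix has all eigenvalues with positive real parts, thereby bringing Theorems \ref{th:3.1}, \ref{th:3.2} and \ref{th:4.1} to bear. Since $\sigma(\tilde A)$ avoids the imaginary axis, a real similarity block-diagonalizes $\tilde A$ as $\operatorname{diag}(A_+,A_-)$, with $A_+$ carrying the eigenvalues of positive real part and $A_-$ those of negative real part (conjugate pairs lie on the same side of the axis, so the blocks stay real), and $\tilde B,\tilde C$ split conformally; such a similarity preserves $\zO$ and hence the potential, so it is an equivalence in the sense of Definition \ref{def:4.2}. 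I would then reflect $A_-$ to $-A_-$, whose spectrum now has positive real part and, by condition $(iii)$, remains disjoint from $\sigma(A_+)$, adjusting the corresponding blocks of $B$ and $C$ so that the resulting triplet $(A,B,C)$ with $A=\operatorname{diag}(A_+,-A_-)$ yields the same $u(x,t)$.

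The main obstacle is precisely this last step: verifying that the sign reflection of the negative spectral block, together with the attendant modification of $B$ and $C$, leaves the solution formulas \eqref{4.4} invariant. Here condition $(iii)$ is used in full, since it guarantees both that the intertwining relation defining the reflected blocks is uniquely solvable and that the reflected spectrum does not overlap the unreflected one. Once the equivalence is in place, items 2)--4) transfer from the positive-real-part triplet: Theorem \ref{th:4.1} gives the simultaneous invertibility of $Q,N$ and the invertibility of $P$, Theorems \ref{th:3.1} and \ref{th:3.2} give the invertibility of $F$ and $E$ on the whole $xt$-plane together with $E=F^\dagger$, the decay $F^{-1},E^{-1}\to0$ as $x\to\pm\infty$, and the analyticity, exponential decay, and mKdV property of the potential, and equivalence identifies these with $\tilde F,\tilde E,\tilde u,\tilde v$. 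Finally, I would remark that the assertion in item 4) that $\tilde u$ and $\tilde v$ satisfy \eqref{0.1} can also be checked without the reduction: the algebraic computation in the proof of Theorem \ref{th:3.1} uses only \eqref{4.2a}--\eqref{4.2b} and the invertibility of the analogue of $\zG$, never the sign of the real parts, so it applies verbatim to any admissible triplet once invertibility of $\tilde F$ (equivalently $\tilde E$) has been secured.
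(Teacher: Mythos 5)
The paper itself offers no proof of this theorem: it delegates everything to \cite{TUIOCOR, Dym}, and the text immediately following it sketches exactly the reduction you propose (replace the admissible triplet by an equivalent one whose matrix has spectrum in the open right half-plane, citing eqs.\ (4.7)--(4.11) of \cite{TUIOCOR} for the construction). Your first paragraph is correct and complete as far as it goes: for a real $\tilde A$, condition $(iii)$ of Definition \ref{def:4.1} is equivalent to $\sigma(\tilde A)\cap\sigma(-\tilde A)=\emptyset$, which is the standard criterion for unique solvability of \eqref{4.2a} and \eqref{4.2b}, and realness and selfadjointness of the solutions then follow from uniqueness.

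The gap is in the step you yourself flag as the main obstacle, and it is not a small one. The block-diagonalizing similarity does preserve $\zO(y;t)=\tilde Ce^{-\tilde Ay}e^{8\tilde A^3t}\tilde B$ and hence the potential, but the reflection $A_-\mapsto-A_-$ is \emph{not} a similarity: it changes the kernel $\zO$ itself, so the assertion that the reflected triplet is equivalent in the sense of Definition \ref{def:4.2} is precisely the nontrivial content, and ``adjusting the corresponding blocks of $B$ and $C$'' conceals the actual construction (which in \cite{AktDV7, TUIOCOR} passes through auxiliary Sylvester equations intertwining $A_+$ and $-A_-$, solvable by condition $(iii)$). Moreover, even granting equivalence of the potentials, items 2) and 3) assert invertibility of the \emph{tilded} objects $\tilde Q,\tilde N,\tilde P,\tilde F,\tilde E$, so you also need the explicit formulas expressing these in terms of $Q,N,P,F,E$ of the reflected triplet through invertible factors (the content of eqs.\ (4.10)--(4.11) of \cite{TUIOCOR}); ``transfer'' is not automatic. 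Your closing observation is sound --- once invertibility of $\tilde F$ is secured, the direct verification in the proof of Theorem \ref{th:3.1} applies verbatim since it uses only \eqref{4.2a} and the differentiation rules --- but as written the proposal proves item 1) and the selfadjointness half of item 2), and reduces everything else to an unproved, and genuinely hard, equivalence claim.
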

Then for every triplet in the admissible class we can apply the algorithmic procedure above.

Now, a natural question is: Starting from a triplet $(\tilde{A},\tilde{B},\tilde{C})$
in the admissible class, is it possible to construct an equivalent triplet $(A,\,B,\,C)$ such that the matrices 
$A,\,B,\,C$ are real and are a minimal representation of the function $\zO(y,\,t)$, and the eigenvalues of $A$
have positive real parts (i.e., a triplet $(A,\,B,\,C)$ of type to this considered in Section \ref{sec:3})?
The answer to this question is affirmative. For the sake of space we refer the reader to
\cite{TUIOCOR} or \cite{AktDV7} where a complete solution to this problem is presented. In particular,
eqs. $(4.7)$ and $(4.8)$ of \cite{TUIOCOR} yield the triplet $(A,\,B,\,C)$ by starting from the triplet 
$(\tilde{A},\tilde{B},\tilde{C})$, while eqs. $(4.10)$, $(4.11)$ in the same paper explain  
how to construct $Q,\,N,\,E,\,F$ from $\tilde{Q},\,\tilde{N},\,\tilde{E},\,\tilde{F}$.
As a consequence, the following theorem allows us to understand which is the ``canonical way" 
to choose the triplet
$(A,\,B,\,C)$ in \eqref{3.4} and, consequently, in the explicit formulas \eqref{3.12} and \eqref{3.19}.
\begin{theorem}\label{th:4.3}
\it
To any admissible triplet $(\tilde A,\tilde B,\tilde C),$
one can associate a special admissible
triplet $(A,B,C),$ where $A$ has the Jordan canonical form
with each Jordan block containing a distinct eigenvalue having a positive real part,
the column $B$ consists of zeros
and ones, and $C$ has real entries. More specifically,
for some appropriate positive integer $m$ we have
\begin{align}\label{4.5}
A=\begin{pmatrix} A_1&
0&\cdots&0\\
0& A_2&\cdots&0\\
\vdots&\vdots &\ddots&\vdots\\
0&0&\cdots&A_m\end{pmatrix},
\qquad
B=\begin{pmatrix}
B_1\\
B_2\\
\vdots\\
B_m\end{pmatrix},\qquad
C=\begin{pmatrix} C_1&C_2& \cdots
&
C_m\end{pmatrix},
\end{align} 
where in the case of a real (positive) eigenvalue $\omega_j$
of $A_j$ the corresponding blocks are given by
\begin{align}\label{4.6}
C_j:=\begin{pmatrix} c_{jn_j}& \cdots
&
c_{j 2}&
c_{j 1}\end{pmatrix},
\end{align}
\begin{align}\label{4.7}
A_j:=\begin{pmatrix} \omega_j&
-1&0&\cdots&0&0\\
0& \omega_j& -1&\cdots&0&0\\
0&0&\omega_j& \cdots&0&0\\
\vdots&\vdots &\vdots &\ddots&\vdots&\vdots\\
0&0&0&\cdots&\omega_j&-1\\
0&0&0&\cdots&0&\omega_j\end{pmatrix},
\qquad
B_j:=\begin{pmatrix} 0\\
\vdots\\
0\\
1\end{pmatrix},
\end{align}
with $A_j$ having size $n_j\times n_j,$ $B_j$ size $n_j\times 1,$
$C_j$ size $1\times n_j,$  and
the constant $c_{jn_j}$ is nonzero.
In the case of complex eigenvalues, which must appear
in pairs as $\alpha_j\pm i\beta_j$ with $\alpha_j>0,$ the corresponding
blocks are given by
\begin{align}\label{4.8}
C_j:=\begin{pmatrix} \gamma_{j n_j}& \epsilon_{j n_j}&\dots
&
\gamma_{j 1}& \epsilon_{j 1}\end{pmatrix},
\end{align}
\begin{align}\label{4.9}
A_j:=\begin{pmatrix} \Lambda_j&
-I_2&0&\dots&0&0\\
0& \Lambda_j& -I_2&\dots&0&0\\
0&0&\Lambda_j&\dots&0&0\\
\vdots&\vdots &\vdots &\ddots&\vdots&\vdots\\
0&0&0&\dots&\Lambda_j&-I_2\\
0&0&0&\dots&0&\Lambda_j\end{pmatrix}
,\quad
B_j:=\begin{pmatrix} 0\\
\vdots\\
0\\
1\end{pmatrix},
\end{align}
where $\gamma_{js}$ and $\epsilon_{js}$ for
$s=1,\dots,n_j$ are
real constants with $(\gamma_{jn_j}^2+\epsilon_{jn_j}^2)>0,$ $I_2$ denotes the
$2\times 2$ unit matrix, each column vector
$B_j$ has $2n_j$ components, each $A_j$
has size $2n_j\times 2n_j,$
and the $2\times 2$
matrix $\Lambda_j$ is defined as
\begin{align}\label{4.10}
\Lambda_j:=\begin{pmatrix} \alpha_j&\beta_j
\\
\noalign{\medskip}
-\beta_j& \alpha_j\end{pmatrix}.
\end{align}
\end{theorem}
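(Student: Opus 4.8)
The plan is to exploit the fact that, on the subclass where all eigenvalues lie in the right half-plane, equivalence of triplets is governed by realization theory, and then to reduce a general admissible triplet to that subclass by a spectral reflection. The potential depends on the triplet only through the kernel $\zO(y;t)=Ce^{-Ay}e^{8A^3t}B$, equivalently through the Markov parameters $\{CA^{k}B\}_{k\ge0}$. Two minimal triplets producing the same kernel are thus two minimal realizations of the same object, and by the state-space isomorphism theorem they are related by a unique invertible similarity $S$, under which $(A,B,C)\mapsto(SAS^{-1},SB,CS^{-1})$; since this leaves $\zO$ and hence $u$ unchanged (the factor $e^{8A^3t}$ also transforms as $Se^{8A^3t}S^{-1}$), it produces an equivalent triplet. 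This gives complete freedom to change basis, which I would use first to bring $A$ to canonical form.

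First I would apply a real similarity to put $A$ in real Jordan canonical form. Since $A$ is real, its eigenvalues are real or occur in conjugate pairs: a real eigenvalue $\zo_j$ gives an ordinary Jordan block of the shape $A_j$ in \eqref{4.7} (after adjusting the superdiagonal sign), while a pair $\za_j\pm i\zb_j$ gives a real Jordan block built from the $2\times2$ matrix $\Lambda_j$ of \eqref{4.10}, as in \eqref{4.9}. Minimality forces $(A,B)$ controllable and $(C,A)$ observable, hence $A$ is non-derogatory (one Jordan block per distinct eigenvalue) and each $B_j$ is a cyclic vector. Using the commutant of a single Jordan block (the invertible upper-triangular Toeplitz matrices), I would then normalize each $B_j$ to $(0,\dots,0,1)^{\mathrm{T}}$, absorbing the residual shears and scalings into $C_j$; reality is preserved throughout, so $C$ stays real, and grouping blocks by distinct eigenvalue yields the block-diagonal shape \eqref{4.5} with the blocks \eqref{4.6}--\eqref{4.10}.

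The hard part will be enforcing that every eigenvalue have \emph{positive} real part, since admissibility (Definition \ref{def:4.1}) only forbids purely imaginary eigenvalues and pairs symmetric about the imaginary axis. These two prohibitions are exactly what I need: together they guarantee that $\sigma(A)$ splits into right- and left-half-plane parts with $\sigma(A)\cap\sigma(-A)=\emptyset$, which is precisely the condition making the Sylvester equation \eqref{4.2b} and the Lyapunov equations \eqref{4.2a} uniquely solvable (Theorem \ref{th:4.2}). For each block whose eigenvalue $\mu$ has negative real part I would reflect $\mu\mapsto-\mu$, now in the right half-plane, and correspondingly modify $B$ and $C$ so that the formula \eqref{4.4} yields the same potential. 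The disjointness $\sigma(A)\cap\sigma(-A)=\emptyset$ ensures the reflected eigenvalues collide with none of the survivors, so the reflected $A$ retains one distinct eigenvalue per block and all hypotheses of Theorem \ref{th:4.2} persist.

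Verifying that this reflection is potential-preserving is the genuine obstacle, since, unlike a similarity, it changes the kernel and cannot be justified by realization theory alone. I would establish it directly from the solution formula using Theorem \ref{th:3.3}: under the reflection $\det E(x,t)$ changes only by a factor exponential and linear in $x$, whose second logarithmic $x$-derivative vanishes, so $[u_x]^2=\partial_x^2\log\det E$ is unchanged; the decay $u\to0$ as $x\to\pm\infty$ then fixes the additive constant, and the sign is matched by the choice of the adjusted $B,C$. (Alternatively, this is precisely the explicit construction of \cite{TUIOCOR} and \cite{AktDV7}, which I would invoke for the detailed bookkeeping.) Once all blocks have positive-real-part eigenvalues, reordering them into $A=\mathrm{diag}(A_1,\dots,A_m)$ and stacking $B$ and $C$ as in \eqref{4.5} gives the asserted special admissible triplet, equivalent to $(\tilde A,\tilde B,\tilde C)$ by construction.
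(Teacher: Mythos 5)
Your outline is broadly the strategy of the literature the paper leans on: the paper's own proof is a one-line citation to Section~3 of \cite{AktDV7}, and the construction there (and in \cite{TUIOCOR}) indeed proceeds by (i) using the state-space isomorphism theorem to exhaust the similarity freedom, (ii) passing to a real Jordan form and normalizing $B$ blockwise, and (iii) reflecting the left-half-plane part of the spectrum. Your treatment of (i) and (ii) is correct and complete in all the ways that matter: minimality with a $p\times 1$ column $B$ does force $A$ to be cyclic, the commutant of a (real) Jordan block does act transitively on the admissible bottom entries of $B_j$, and condition $(iii)$ of Definition~\ref{def:4.1} is exactly what keeps $\sigma(A)\cap\sigma(-A)=\emptyset$ so that the reflection creates no eigenvalue collisions.

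The genuine gap is in step (iii), which you yourself flag as the hard part but do not actually carry out. First, you never say what the ``correspondingly modified'' $B$ and $C$ are; that formula (in \cite{TUIOCOR} it involves the solutions of the Sylvester/Lyapunov equations restricted to the left-half-plane block, roughly replacing $(A_2,B_2,C_2)$ by $(-A_2^\dagger,\,\cdot\,,\,\cdot\,)$ dressed with $P$-type factors) is the entire content of the theorem, and without it there is nothing to verify. Second, the proposed verification via Theorem~\ref{th:3.3} is circular as stated: the claim that $\det E$ changes only by a factor $e^{ax+b}$ under the reflection is precisely the nontrivial identity one must prove \emph{after} writing down the new triplet and its $E$-matrix (this is eqs.~(4.10)--(4.11) of \cite{TUIOCOR}), not something that can be asserted in order to justify the choice of $B$ and $C$. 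Even granting it, $[u_x]^2=[v_x]^2$ plus analyticity and decay only yields $u=\pm v$, and since $-u$ also solves \eqref{0.1} the sign cannot be ``matched by the choice of the adjusted $B,C$'' without, again, exhibiting that choice. So as a self-contained argument the reflection step fails; as a proof it reduces, exactly like the paper's, to invoking Section~3 of \cite{AktDV7}, and you should either do that cleanly or reproduce the explicit reflected triplet and check formula \eqref{4.4} on it directly.
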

\begin{proof} 
The real triplet $(A,B,C)$ can be chosen as
in Section~3 of \cite{AktDV7}.
\end{proof}

\section{Significant examples}\label{sec:5}
{\bf Example 1}: Choosing the triplet $\left(A,\,B,\,C\right)$ as
\begin{equation*}
A=\begin{pmatrix}a\end{pmatrix},\quad B=\begin{pmatrix}1\end{pmatrix},\quad
C=\begin{pmatrix}c\end{pmatrix}
\end{equation*}
where $a>0$ and $0\neq c\in\R$ and solving the Sylvester eq. \eqref{4.1b}, we get
\begin{equation*}
P=\begin{pmatrix}\frac{c}{2a}\end{pmatrix}\,.
\end{equation*}
By using eq. \eqref{3.19}, we obtain
\begin{equation*}
v(x,\,t)=\dfrac{-2c}{e^{2ax-8a^3t}+\frac{c^2}{4a^2}e^{-2ax+8a^3t}}\,,
\end{equation*}
which may be called a ``single-soliton solution" to \eqref{0.1}. 

{\bf Example 2}: In this example we consider the case in which the transmission coefficients have a pole 
of order three. More precisely, let us consider the following triplet
\begin{equation*}
A=\begin{pmatrix}1& -1 & 0\\0 & 1 &-1\\0 & 0 & 1 \end{pmatrix},\quad B=\begin{pmatrix}0\\0\\1\end{pmatrix},\quad
C=\begin{pmatrix}1 & 2 & 1/2\end{pmatrix}.
\end{equation*}
It is not difficult to verify that the following matrix $P$ satisfies \eqref{4.1b}
\begin{equation*}
P=\begin{pmatrix} 1/8 & 7/16 & 5/8 \\1/4 & 3/4 & 13/16 \\1/2 & 5/4 & 7/8\end{pmatrix}\,.
\end{equation*}
In this case it is not a good idea to unzip the solution formula \eqref{3.19} 
in order to write its analytic expression because this representation take a lot of pages!
However, using Mathematica it is very easy to plot this solution. In the next 
figure four different graphs of $u(x,t)$ for four fixed values of $t$ 
($t=0,\, t=1/4,\, t=1/2$ and $t=3/4$) are given.
\begin{figure}[ht]
  \centering
    \includegraphics[width=12cm]{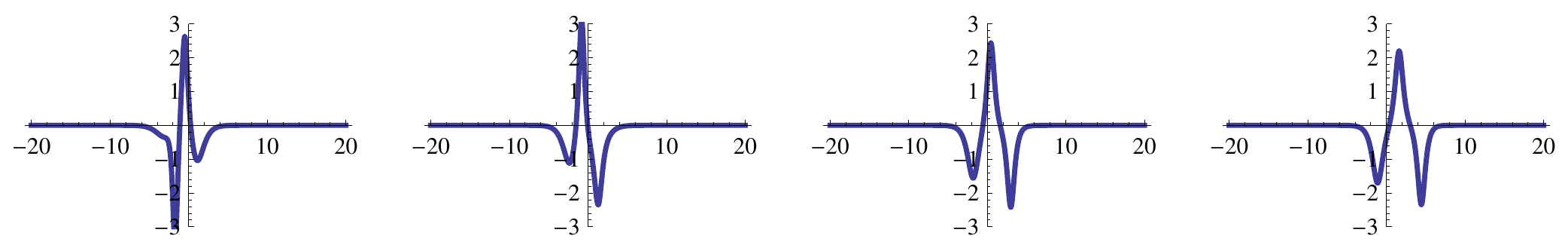}
\end{figure}  

\section*{Acknowledgements}

The author is greatly indebted to Cornelis van der Mee for useful discussions 
and to 
Antonio Aric\`o for his 
assistance in developing the Mathematica code.

\end{document}